\def\R{\mathbb{R}}
\def\C{\mathbb{C}}
\newcommand{\pd}[2]{\dfrac{\partial #1}{\partial #2}}
\newcommand{\der}[2]{\dfrac{\mathrm{d} #1}{\mathrm{d} #2}}
\newcommand{\bra}[1]{\langle #1 |}
\newcommand{\ket}[1]{| #1 \rangle}
\newcommand{\scalarproduct}[2]{\langle #1 | #2 \rangle}
\def\I{\mathbb{I}}      
\def\br{\mathbf{r}}     
\def\bq{\mathbf{q}}     
\def\Tr{{\rm Tr}}       
\newtheorem{proposition}{Proposition}
\title{\textbf{Quantum сontrol landscapes for generation of ${H}$ and ${T}$ gates in an open qubit with  both coherent and environmental drive}}
\date{}
\author[1,2,*]{Vadim N. Petruhanov}
\author[1,2,**]{Alexander N. Pechen}
\affil[1]{\it \normalsize Department of Mathematical Methods for Quantum Technologies,\par
Steklov Mathematical Institute of Russian Academy of Sciences,\par
8~Gubkina str., Moscow, 119991, Russia, }
\affil[2]{\it Quantum Engineering Research and Education Center,\par
University of Science and Technology MISIS,\par
6~Leninskiy prospekt, Moscow, 119991, Russia;}
\affil[*]{vadim.petrukhanov@gmail.com, \href{http://www.mathnet.ru/eng/person176798}{mathnet.ru/eng/person176798}}
\affil[**]{apechen@gmail.com, \href{http://www.mathnet.ru/eng/person17991}{mathnet.ru/eng/person17991}}
\begin{document}

\maketitle

\begin{abstract} An important problem in quantum computation is the generation of single-qubit quantum gates such as Hadamard ($H$) and $\pi/8$ ($T$) gates, which are components of a  universal set of gates. Qubits in experimental realizations of quantum computing devices are interacting with their environment. While the environment is often considered as an obstacle leading to a decrease in the gate fidelity, in some cases, it can be used as a resource. Here, we consider the problem of the optimal generation of $H$ and $T$ gates using coherent control and the environment as a resource acting on the qubit via incoherent control. For this problem, we studied the quantum control landscape, which represents the behavior of the infidelity as a functional of the controls. We considered three landscapes, with infidelities defined by steering between two, three (via Goerz--Reich--Koch approach), and four matrices in the qubit Hilbert space. We observed that, for the $H$ gate, which is a Clifford gate, for all three infidelities, the distributions of minimal values obtained with a gradient search have a simple form with just one peak. However, for the $T$ gate, which is a non-Clifford gate, the situation is surprisingly different---this distribution for the infidelity defined by two matrices also has one peak, whereas distributions for the infidelities defined by three and four matrices have two peaks, which might indicate the possible existence of two isolated minima in the control landscape. It is important that, among these three infidelities, only those defined with three and four matrices guarantee the closeness of the generated gate to a target and can be used as a good measure of closeness. We studied sets of optimized solutions for the most general and previously unexplored case of coherent and incoherent controls acting together and discovered that they form sub-manifolds in the control space, and unexpectedly, in some cases, two isolated sub-manifolds.

\end{abstract}

\medskip

\noindent{{\bf Keywords:} incoherent control; control by environment; open quantum system; qubit; \\quantum gate generation; gradient method}

\medskip

\section{Introduction}\label{Sec1}

Quantum computation is an actively developing field within the general area of quantum technologies~\cite{Schleich2016,Acin2018}. An important problem in quantum computation is the generation of single-qubit quantum gates such as the Hadamard ($H$) gate and $\pi/8$ ($T$) gate, which are defined by unitary matrices 
\begin{equation}
H= \frac{1}{\sqrt{2}}\begin{pmatrix}
1 & 1\\
1 & -1
\end{pmatrix},\qquad T = \begin{pmatrix}
1 & 0\\
0 & e^{i\pi/4}
\end{pmatrix}.
\label{H_T_gates}
\end{equation}
These gates together with the two-qubit C-NOT gate, which, along with the $H$ gate, is a Clifford gate and can be efficiently simulated classically according to the Gottesman--Knill theorem~\cite{Gottesman_1998}, form a universal quantum gate set sufficient for universal quantum computation~\cite{NielsenChuangBook}. Therefore, the practical realization of $H$ and $T$ gates is of high importance for quantum computation.

Qubits in experimental realizations of quantum computation devices are interacting with their environment. Therefore, they are open quantum systems whose dynamics are non-unitary and is described instead of the Sch\"odinger equation by various master equations~\cite{BreuerPetruccioneBook}. The environment is often considered as an obstacle for manipulating the qubits. However, in some cases, it can serve as a useful resource. It can be used, for example, for quantum computing with mixed states and non-unitary quantum gates, which was proposed in~\cite{Aharonov_1998,Tarasov_2002}. Engineered environments were suggested in a more general context for improving quantum computation~\cite{Cirac2009}, cooling translational motion~\cite{Calarco2011}, preparing entangled states~\cite{Diehl_nature_2008,Weimer_nature_2010}, inducing multi-particle entanglement dynamics~\cite{Barreiro_nature_2010}, making robust quantum memories~\cite{Pastawski_2011}, dissipative control of a quantum spin chain~\cite{Morigi_Eschner_Cormick_Lin_Leibfried_Wineland_2015}, the dissipative preparation of many-body quantum states in a superconducting qutrit array~\cite{Wang_Snizhko_Romito_Gefen_Murch_2023}, and inducing stationary quantum memory effects~\cite{Sun_Luoma_Liu_Piilo_Li_Guo_2023}.
The control of one- and two-qubit systems under dissipative conditions was studied~\cite{Kallush_Dann_Kosloff_2022}. Various results in the quantum control of open quantum systems are discussed in the review~\cite{KochEPJQuantumTechnol2022}.

An early approach for using the environment and dissipation, generally with time-dependent decoherence rates and time-dependent master equations, for controlling quantum systems was developed in 2006 and was called \emph{incoherent control}~\cite{Pechen_Rabitz_2006}. In this approach, spectral density of the environment is used as the control function (which is generally time-dependent) to manipulate quantum systems.

This spectral density represents the distribution of the particles of the environment in their momenta and internal degrees of freedom. For example, for the environment formed by incoherent (and not necessarily thermal) photons, it describes the distribution of photons in their momenta and polarization. This distribution apparently affects the reduced dynamics of a quantum system immersed in this environment and enters in the master equation for the reduced density matrix. This is well known for two types of the environment---weakly interacting with the system (weak coupling limit)~\cite{DaviesBook,AccardiLuVolovichBook,SpohnLebowitzReview}, or interacting with the system strongly but rarely (low density limit)~\cite{Dumcke1985,PechenJMP2004,VacchiniPR2009}, which were both considered in~\cite{Pechen_Rabitz_2006}. Master equations derived beyond secular approximation and for ultrastrong-coupling and the strong-decoherence limits~\cite{Trushechkin_2021,Trushechkin_2022} may also be of interest for an investigation.

If the spectral density depends on time, then the master equation for the reduced density matrix will also generally be time-dependent. For incoherent control, the master equations for the description of the dynamics of a quantum system controlled by the spectral density of the environment of the following form was proposed~\cite{Pechen_Rabitz_2006} as
\begin{equation}\label{eq:ME}
\frac{d\rho_t}{dt}=-i[H_c(t),\rho_t]+\sum\limits_k \gamma_k(t) {\cal D}_k.
\end{equation}
where $H_c(t)$ is some controlled Hamiltonian, $\gamma_k(t)$ are the generally controlled \emph{time-dependent decoherence rates}, and ${\cal D}_k$  are some dissipators. In~\cite{Pechen_Rabitz_2006}, beyond the general approach, two particular physical forms of the GKSL dissipators ${\cal D}_k$ were considered based on the derived in 1970th and 1980th master equations for the weak coupling and low density limits. A more detailed description of the concept of incoherent control is also provided in Section~3 of~\cite{PetruhanovPechenJPA2023}. 

A natural question refers to how useful such incoherent control can be. In this regard, in~\cite{Pechen2011}, it was shown that incoherent control when combined together with coherent control using lasers can be used for the approximate generation of arbitrary mixed density matrices for generic (i.e., almost for all) quantum systems. It is important that this was shown within the physical class of ${\cal D}_k$, which describe the reduced dynamics of a quantum system weakly interacting with the environment (e.g., with incoherent photons or phonons). Coherent control alone cannot achieve the control goal of preparing arbitrary mixed quantum states for generic quantum systems. Moreover, the control scheme proposed in~\cite{Pechen2011} also allows the implementation of \emph{all-to-one} or \emph{universally optimal Kraus maps}, i.e., Kraus maps, which steer all initial density matrices into the same one density matrix and were introduced and studied for quantum control in~\cite{Wu_2007_5681}. In this work, these maps were shown to be important for quantum control, since they are inherently robust to variations in the system's initial state. The generation of some maps of this class was also discussed recently in~\cite{Kallush_Dann_Kosloff_2022}. There was a non-trivial problem of the experimental realization of such all-to-one Kraus maps, which was recently solved for an open single qubit~\cite{Zhang_Saripalli_Leamer_Glasser_Bondar_2022}. Other various experimental works have been conducted using control by incoherent photons or the modification of relaxation rates. In particular, shaped incoherent light was used for the optimization of up-conversion hues in phosphors~\cite{LaforgeJCP2018}. The experimental control of the system--reservoir interaction and hydrogen spin relaxation rates was achieved~\cite{Pires2023}. The low density limit case with collisional-type decoherence is less studied in this context, although it describes such important examples as a test particle in a quantum gas~\cite{VacchiniPRE2001} or the quantum linear Boltzmann equation~\cite{VacchiniPR2009}. 

Recently, a surprising new result was obtained for a qubit interacting with the environment driven by coherent and incoherent controls~\cite{Lokutsievskiy_2021}. It was shown that inside of the Bloch ball, many states can be obtained exactly, except for states in some domain of size $\delta\approx\gamma/\omega$, where $\gamma$ is the decoherence rate and $\omega$ is the qubit transition frequency. Moreover, reachable sets of states for a qubit driven by coherent and incoherent controls were analytically described using geometric control theory. A reachable set of states is the set of states that can be obtained from a given state using all available coherent and incoherent controls and any time. A controllability analysis of the quantum systems immersed within an engineered environment was performed in~\cite{Grigoriu_Rabitz_Turinici_2013}.

The analysis of controllability and the description of reachable sets answers the question of what states can, in principle, be created from a given initial state for a particular quantum system. For any state in the reachable set, an optimal control that steers the initial state into this state exists. Thus, the analysis of reachable sets and controllability answers the question of the principle existence of optimal control for a given control problem, but does not say how to find such controls. As soon as the existence of an optimal control is established, the next question is how to find this optimal control. 

According to~\cite{Bondar2020}, there is no single algorithm that, for any given set of controls and any pair of initial and target states, answers whether the initial state can be transferred into the target state without using these controls. 
 Despite this negative result, for a particular class of problems, such an algorithm may exist. For optimizing only coherent control, various algorithms were applied or developed, including the genetic algorithm~\cite{Judson_Rabitz_1992}, the Krotov algorithm~\cite{Tannor_Kazakov_Orlov_1992}, the Hamilton--Jacobi--Bellman equations~\cite{Gough_Belavkin_Smolyanov_2005}, 
chopped random-basis quantum optimization (CRAB)~\cite{Caneva_Calarco_Montangero_2011}, the Maday--Turinici algorithm~\cite{Maday_Turinici_2003}, GRAPE~\cite{Khaneja2005,DeFouquieres_Schirmer_Glaser_Kuprov_2011,Lucarelli_2018,Goodwin_Vinding_2023}, quantum feedback control~\cite{Wiseman_Milburn_1993,Doherty_Habib_Jacobs_Mabuchi_Tan_2000,Lloyd_Viola_2001,VanHandel_Stockton_Mabuchi_2005,Gough_2012,Schirmer_Jonckheere_Langbein_2018}, monotonically convergent algorithms~\cite{Turinici_2003,Lapert_Tehini_Turinici_Sugny_2008}, quantum reinforcement learning~\cite{Dong_IEEE_2008} and quantum machine learning~\cite{Biamonte_Wittek_Pancotti_Rebentrost_Wiebe_Lloyd_2017}, deep reinforcement learning~\cite{Niu_Boixo_Smelyanskiy_Neven_2019}, the combined approach via the quantum optimal control suite (QuOCS)~\cite{QuOCS2022}, etc. For finding both coherent and incoherent controls, genetic evolutionary algorithms were initially used~\cite{Pechen_Rabitz_2006}. Recently, the speed gradient method~\cite{PechenBorisenokFradkov2022},
gradient projection methods~\cite{MorzhinPechenQIP2023}, the Krotov method, and stochastic free-gradient optimization methods~\cite{MorzhinPechenIrkutsk2023} were adapted. 

A particular class of such optimization methods is that of local optimization methods, an example of which is the gradient ascent pulse engineering (GRAPE) approach~\cite{Khaneja2005}. This approach was extended to open quantum systems driven by coherent and incoherent controls in~\cite{PetruhanovPechenJPA2023}, where, in addition to the general scheme for a qubit, a new exact analytical expression for gradients of various objectives for a single qubit were derived by solving a cubic equation via the Cardano method. It is important that the efficiency of gradient-based approaches depends on the existence or absence of traps---local but not global maxima for the maximization of the objective, or minima for the minimization of the objective. The problem of the analysis of quantum control landscapes was posed first in~\cite{Rabitz_Hsieh_Rosenthal_2004}. Many results have been obtained since then, including proof of the absence of traps for a single qubit~\cite{Pechen_Ilin_2012,Volkov_Morzhin_Pechen_2021}, establishing the presence of trapping behavior in quantum systems with various symmetries~\cite{PechenTannorPRL,SchirmerIDAQP2013,VolkovPechenUMN2023,ElovenkovaQR2023}, etc.

Motivated by this, here we considered the problem of generating single-qubit $H$ and $T$ gates for an open qubit using coherent and incoherent controls. The gradient-based approach for solving this problem was developed in~\cite{PetruhanovPhotonics2022}. In the present work, we studied quantum control landscapes for this problem, which determines whether the GRAPE approach can be efficiently applied or its applications should take into account possible traps. In the latter case, runs from different starting points were performed for more efficient optimization. We considered three different landscapes for this problem, corresponding to three different objective functional defined by two, three, and four matrices, correspondingly, in the two-dimensional qubit Hilbert space.
For each gate and each objective functional, we performed runs of the GRAPE approach from multiple random initial starting points and, for each starting point, we computed the best (minimal, since we considered infidelities) objective value attained by the GRAPE approach. Then, we built the resulting distributions of the best obtained values. The observed results are surprising. For the Hadamard gate, which is a Clifford gate,  the distributions of best objective values for the all three objective functionals have a simple form with just one peak. However, for the $T$ gate, which is a non-Clifford gate, the situation is different---the distribution for the objective functional defined by the two matrices also has one peak, whereas the distributions of the best obtained values for objective functionals defined by the three and the four matrices have two isolated peaks, which might indicate the possible existence of two isolated minima with different values of the infidelity. It is important that, among these three objective functionals, the smallness of the objective with two matrices does not guarantee the closeness of the generated process to the target gate. Only objectives defined with three and four matrices guarantee the closeness of the generated gate to a target and can be used as a good measure for the gate generation problem, and, exactly for these objectives, we observed two peaks. An important feature of quantum optimal control problems is that optimal solutions are not isolated points but multi-dimensional sub-manifolds in the control space, as was shown for the coherent control of finite-dimensional~\cite{RabitzPRA2006} and continuous-variable systems~\cite{LaroccaPRA2020}. We also studied manifolds of optimized solutions and found that, in the case of both coherent and incoherent controls, the optimized solutions also tend to form manifolds, but interestingly, in some cases, they form two isolated manifolds. 

The structure of the paper is the following. In Section~\ref{Sec2}, the master equation for a qubit driven by coherent and incoherent controls is provided. In Section~\ref{Sec3},  three objective functionals for the single-qubit gate generation problem are defined and discussed, based on two, three, and four matrices in the qubit Hilbert space, and an objective functional defined directly in terms of quantum channels. The basic extension of the gradient-based optimization method for this optimization problem is provided in Section~\ref{Sec4}. The numeric analysis of the control landscapes is outlined in Section~\ref{Sec5}. Section~\ref{Conclusions} summarizes the~results.

\section{Environment-Assisted Control of a Qubit}\label{Sec2}

We consider the master equation describing the evolution of an open two-level quantum system (qubit) driven by coherent and incoherent controls: 
\begin{equation}
\frac{d\rho}{dt}  = - i [H_0 + V u(t), \rho] + \gamma \mathcal{L}_{n(t)}(\rho),\quad \rho(0) = \rho_0,
\label{system_qubit}
\end{equation}
where $\rho(t)$ is a $2\times2$ density matrix representing a state of the system at time $t\in[0,T]$, and $H_0$ and $V$ are the free Hamiltonian and the interaction Hamiltonian, respectively:
\begin{equation*}
H_0 = \omega \begin{pmatrix}
0 & 0 \\
0 & 1
\end{pmatrix},\qquad V = \mu\sigma_x
= \mu\begin{pmatrix}
0 & 1 \\
1 & 0
\end{pmatrix},
\end{equation*}
where $\omega$ is the qubit frequency, $\mu>0$ is the dipole moment, $\gamma$ is the decoherence rate coefficient, and the real-valued function $u(t)$ is a coherent control. The dissipative superoperator $\mathcal{L}_{n(t)}(\rho)$ describing the interaction with the environment is
\begin{equation*}
\mathcal{L}_{n(t)}(\rho) = n(t) \left(\sigma^+\rho\sigma^- + \sigma^-\rho_t\sigma^+ - \dfrac{1}{2}\{\sigma^-\sigma^+ + \sigma^+\sigma^- , \rho\}\right) + \left(\sigma^+\rho\sigma^- -  \dfrac{1}{2}\{\sigma^-\sigma^+, \rho\}\right).
\end{equation*}
Here, matrices $\sigma^\pm$ are
\begin{equation}
\sigma^+ = \begin{pmatrix}
0 & 1 \\
0 & 0
\end{pmatrix},\qquad\sigma^-= \begin{pmatrix}
0 & 0 \\
1 & 0
\end{pmatrix},
\end{equation}
and the non-negative function $n(t) \geq 0$ is an incoherent control. 

This master equation can describe the physical situation of an atom interacting with a generally time-dependent laser field and immersed in a time-dependent bath of incoherent photons, for example, with a bath of photons with a time-dependent temperature or with shaped incoherent light dynamically tailoring the spectrum of a broadband incoherent source to control the atomic and
molecular scale kinetics. In this case, incoherent control can physically be the spectral density of the (incoherent) environmental photons. This method was theoretically proposed in~\cite{Pechen_Rabitz_2006}. An experimental realization for the optimal control of the evolving hue in near-IR to visible up-converting phosphors, which mimics various aspects of chemical reaction kinetics including non-linear behavior, was implemented in~\cite{LaforgeJCP2018}. Another approach that allows for the experimental modification of decoherence rates was realized via the experimental control of the system--reservoir interaction and hydrogen spin relaxation rates~\cite{Pires2023}. Non-unitary control is also necessary for the experimental realization of all-to-one Kraus maps~\cite{Zhang_Saripalli_Leamer_Glasser_Bondar_2022}. 

For a two-level system, it is convenient to use Bloch ball parameterization of the density matrix $\rho$ using a vector $\br \in \R^3$, $\|\br\| \leq 1$:
\begin{equation}
\rho = \dfrac{1}{2} \Bigg(\mathbb{I} + \sum_{j = 1,2,3}r_j\sigma_j\Bigg), \qquad r_j = \mathrm{Tr}\rho \sigma_j,\qquad j = 1,2,3, 
\label{bloch_variables}
\end{equation}
where $\mathbb{I}$ is the identity matrix and $\sigma_i$ are Pauli matrices. The evolution of the Bloch vector  in this parameterization takes the following inhomogeneous form:
\begin{align}
\frac{d\mathbf{r}}{dt} &= A(u(t), n(t)) \mathbf{r} + \mathbf{b} = \left(B + B^u u(t) + B^n n(t) \right) \mathbf{r} + \mathbf{b}\nonumber  \\
&=
\begin{pmatrix}
-\gamma(n(t) + 1/2) & \omega & 0\\
-\omega & -\gamma(n(t) + 1/2) & -2\mu u(t)\\
0 & 2\mu u(t) & -2\gamma(n(t) + 1/2)
\end{pmatrix}
\br + 
\begin{pmatrix}
0\\
0\\
\gamma
\end{pmatrix}
,\quad \mathbf{r}(0) = \mathbf{r}_0,
\label{bloch_equation}
\end{align}
where $r_{0j} = \mathrm{Tr}\rho_0 \sigma_j$, $j = 1,2,3,$ and
\begin{multline*}
B = \begin{pmatrix}
- \dfrac{\gamma}{2} & \omega & 0 \\
- \omega & - \dfrac{\gamma}{2} & 0 \\
0 & 0 & - \gamma \\
\end{pmatrix},\, B^u = 2\mu
\begin{pmatrix}
0 & 0 & 0 \\
0 & 0 & -1 \\
0 & 1 & 0 \\
\end{pmatrix},\, B^n = -\gamma
\begin{pmatrix}
1 & 0 & 0 \\
0 & 1 & 0 \\
0 & 0 & 2 \\
\end{pmatrix},\,\mathbf{b} = 
\begin{pmatrix}
0 \\
0 \\
\gamma \\
\end{pmatrix}.
\end{multline*}

We also use the extension of the Bloch ball representation~(\ref{bloch_variables}) to a four-component vector using all four Hermitian $2\times 2$ basis matrices as was the case in~\cite{PetruhanovPechenJPA2023}:
\begin{equation}
\rho = \frac{1}{2}\sum_{i = 0}^4 q_i \sigma_i,\qquad i = 0,1,2,3,
\label{4_bloch_variables}
\end{equation}
where $\sigma_0 =\I$. Since $q_0 = {\rm Tr}\rho = 1$, the 0-coordinate is always equal to unity. In this representation, $\bq(t) = (1, \br(t))$ and the system~(\ref{bloch_equation}) takes a homogeneous form
\begin{equation}
\frac{d\mathbf{q}}{dt} = C(u(t), n(t))\bq = \left(\begin{array}{c|c}
0 & 0 \\
\hline
b & A(u(t), n(t))\end{array}\right)\bq, \qquad \bq(0) =  (1, \br_0).
\label{4_bloch_equation}
\end{equation}

Let us introduce the evolution operator $\Phi(t, u, n)$ (a dynamical map), which is a completely positive and a trace-preserving superoperator and is also called a quantum channel. It allows the solution of the system~(\ref{system_qubit}) to be written under control $(u, n)$ at time $t\in[0,T]$ as:
\begin{equation}
\rho(t, u, n) = \Phi(t, u, n) \rho_0.
\label{evolution_operator}
\end{equation}
Denote $\Psi(t, u, n)$ as a matrix representation of the operator $\Phi(t, u, n)$ in the basis $\sigma_i/2$~(\ref{4_bloch_variables}): $\bq(t, u, n) =  \Psi(t, u, n)\bq(0)$. The matrix $\Psi(t, u, n)$ is the solution of the system similar to~(\ref{4_bloch_equation}):
\begin{equation}
\der{\Psi}{t} = C(u(t), n(t))\Psi, \qquad \Psi(0) = \I.
\label{4_bloch_equation_evolution_matrix}
\end{equation}

The evolution operator $\Phi(t, u, n)$ is trace-preserving: $\Tr(\rho(T, u, n)) = \Tr(\Phi(t, u, n)\rho_0) = \Tr\rho_0 = 1$. This property, in terms of the matrix $\Psi(t, u, n)$, corresponds to preservation of the 0-coordinate: $q_0(t, u, n) = q_0(0) = 1$. Therefore, the matrix $\Psi(t, u, n)$ has the following form:
\begin{equation}
\Psi = 
\left(\begin{array}{c|c}
1 & 0 \\
\hline
\Psi' & \Psi''\end{array}\right),
\label{4_bloch_equation_evolution_matrix_structure}
\end{equation}
where $\Psi'$ is a $3\times1$ matrix and $\Psi''$ is a $3\times3$ matrix. Since $\bq = (1, \br)$, for the evolution of the vector $\br(t)$, we have (notations of controls are omitted for brevity):
\begin{equation}
\br(t) = \Psi''(t)\br_0 + \Psi'(t).
\label{bloch_vector_evolution_structure}
\end{equation}
Thus, in the Bloch ball representation, the evolution of the open quantum system is a composition of the linear map $\Psi''$ and translation by the vector $\Psi'$. 

If an evolution $\Phi$ is unital, i.e., preserving the maximally mixed state: $\Phi(\I/2) = \I/2$, then the matrix $\Psi$ in the Bloch parameterization preserves the $(0, 0, 0)$ state; therefore, its inhomogeneous part equals zero, $\Psi' = 0$, and the evolution is linear for the vector $\br(t)$: 
\begin{equation}
\br(t) = \Psi''(t)\br_0 .
\label{bloch_vector_evolution_structure_linear}
\end{equation}

\section{Objective Functionals for the Gate Generation Problem}\label{Sec3}

The problem of gate generation is to find a control $(u,n)$ that will produce a dynamic map $\Phi(T, u, n)$ (\ref{evolution_operator}), which is an evolution operator at time $t = T$, that coincides or is as close as possible to the desired unitary (or more generally, non-unitary) gate $U$ operation: $\Phi(T, u, n) = U\cdot U^\dagger$. 
 It can happen  that no admissible control gives an exact equality between  the actual evolution and the target operation. For example, in~\cite{Lokutsievskiy_2021}, it was shown that the set of attainability for the system~(\ref{system_qubit}) from the poles does not fill the whole Bloch ball. Particularly, Hadamard eigenstates $\ket{+}$ and $\ket{-}$ are not reachable from the poles for the distance $\sim\gamma/\omega$ due to the presence of decoherence. Therefore, the system~(\ref{system_qubit}) does not allow for the exact   generation of, e.g., the Hadamard gate, or any other gate besides the rotation around the $z$-axis in the Bloch ball. 

Thus, the gate generation problem should be formulated as to find a control $(u,n)$ that will produce a dynamic map $\Phi(T, u, n)$ that is as close as possible to the desired target (in our case, the unitary operation $U\cdot U^\dagger$) in the sense of some chosen metric or measure of distance. Then, the goal is to minimize a proper objective functional defined by $\Phi(T, u, n)$ and $U$. 
One of the possible options is to examine the action of the dynamic map $\Phi(T, u, n)$ on a chosen set of states or matrices in the qubit Hilbert space.

In this work, we considered four functionals. The first three are the objective functionals and have the form of a mean value of the squared Hilbert--Schmidt distance between actions of $\Phi(T, u, n)$ and $U \cdot U^\dagger$ on some set of density matrices $\{\rho_0^{(j)}\}_{j = 1}^K$:
\begin{equation}
F_{U,K}\left(u, n; \rho_0^{(1)}, \dotsc, \rho_0^{(K)}\right) = \frac{1}{K}\sum_{j = 1}^K \left\| \Phi(T, u, n) \rho_0^{(j)} - U\rho_0^{(j)}U^\dagger \right\|^2,
\label{functional_on_states}
\end{equation}
This class of objective functionals was also used in~\cite{PetruhanovPhotonics2022}. We consider three different sets with $K =$ 2, 3, and 4 matrices, which give the first $F_{U,2}$, the second $F_{U,3}$, and the third $F_{U,4}$ objectives, respectively.
\begin{itemize}

\item The first set $\{\rho_0^{(1)}, \rho_0^{(2)}\}$ corresponds to basis states $\ket{0}$ and $\ket{1}$ in $\mathcal{H} = \C^2$:
\begin{equation}
\rho_0^{(1)} = \ket{0}\bra{0},\qquad \rho_0^{(2)} = \ket{1}\bra{1}.
\label{first_set_basis_states}
\end{equation}

\item The second set corresponds to three states determining the implementation of the unitary operation among all dynamic maps~\cite{Goerz_NJP_2014_2021}:
\begin{equation}
\rho_0^{(1)} = \begin{pmatrix}
2/3 & 0\\
0 & 1/3
\end{pmatrix}, \qquad \rho_0^{(2)} = \begin{pmatrix}
1/2 & 1/2\\
1/2 & 1/2
\end{pmatrix}, \qquad \rho_0^{(3)} =
\begin{pmatrix}
1/2 & 0\\
0 & 1/2
\end{pmatrix}.
\label{second_set_basis_states}
\end{equation}
We sometimes call the objective functional defined using this set as the GRK (Goerz--Reich--Koch)-type objective functional. 

\item The third set corresponds to four basis Hermitian matrices in the linear space in which the dynamic maps act:
\begin{equation}
\begin{split}
\rho_0^{(1)} &= \ket{0}\bra{0} = 
\begin{pmatrix}
    1 & 0 \\
    0 & 0
\end{pmatrix}, \qquad 
\rho_0^{(2)} = \ket{1}\bra{1} = 
\begin{pmatrix}
    0 & 0 \\
    0 & 1
\end{pmatrix},\\ 
\rho_0^{(3)} &= \ket{+}\bra{+} = \frac{1}{2}
\begin{pmatrix}
    1 & 1 \\
    1 & 1
\end{pmatrix}, \qquad
\rho_0^{(4)} = \ket{i}\bra{i} = \frac{1}{2}
\begin{pmatrix}
    1 & -i \\
    i & 1
\end{pmatrix}.
\end{split}
\label{third_set_basis_states}
\end{equation}
\end{itemize}
A more detailed discussion of these three sets is provided in Appendix~\ref{Sec:AppendixA}. 

The fourth functional is used to analyze the implementation of the generated unitary gates and equals the squared Hilbert--Schmidt distance in the space of operators on $\C^{2\times2}$:
\begin{equation}
    F_U(u, n) = \|\Phi(T, u, n) - U \cdot U^\dagger\|^2.
    \label{Frobenius_norm_functional}
\end{equation}
Unlike the objective functionals $F_{U,2}$, $F_{U,3}$, and $F_{U,4}$, this functional is defined by the true distance and allows one to define the closeness of an evolution operator of the system to the desired unitary operation.

\section{Gradient-Based Optimization Method}\label{Sec4}

For the optimization of the functional~(\ref{functional_on_states}), we used our modification of the gradient-based GRAPE method, which was originally proposed for the generation of NMR sequences~\cite{Khaneja2005}. The gradient-based approach for controlling open quantum systems driven by \textbf{\emph{both coherent and incoherent controls}} was developed recently in~\cite{PetruhanovPechenJPA2023} for general $N$-level open quantum systems, where the exact solution for a qubit was found. The implementation of this method for the considered one-qubit quantum system was performed in~\cite{PetruhanovPhotonics2022}. Here, we briefly present the concept and provide the basic expression for the gradient of our objective functionals.

To solve the optimization problem of a unitary gate $U$ generation:
\begin{equation}
F_{U,K}\left(u, n; \rho_0^{(1)}, \dotsc, \rho_0^{(K)}\right) \to \inf_{u, n},
\label{optimization_problem}
\end{equation}
we used piecewise constant controls:
\begin{align}
u(t) &= \displaystyle \sum_{k=1}^M u_k \chi_{[t_{k-1}, t_k)}(t), \qquad u_k\in\mathbb R\label{PConst_qubit_control:u}\\
w(t) &= \displaystyle \sum_{k=1}^M w_k \chi_{[t_{k-1}, t_k)}(t), \qquad w_k\in\mathbb R\label{PConst_qubit_control:w}\\
n(t) &= w(t)^2,
\end{align}
where $0 < t_0 < t_1 < \dots < t_M = T,$ and $\chi_{[t_{k-1}, t_k)}$ is the characteristic function of the half-open interval $[t_{k-1}, t_k)$. The new control $w(t)$ is introduced in order to achieve the optimization of the space $\R^{2M}$ of control components $u_k$ and $w_k$ and to avoid dealing with the boundary $n(t)\geq 0$.
For brevity, we denote pair $(u,w)$ as $v = (v^1, v^2)$.
The use of piecewise constant control leads to a piecewise constant r.h.s of the evolution equation~(\ref{bloch_equation}) so that the evolution is given by a sequence of states $\br(t_k)$:
\begin{multline}
\mathbf{r}_k \equiv \mathbf{r}(t_{k}) = e^{A_k \Delta t_k} \mathbf{r}_{k - 1} + \mathbf{g}_k = e^{A_k \Delta t_k} \dotsm e^{A_1 \Delta t_1} \mathbf{r}_0 \\+ e^{A_k \Delta t_k} \dotsm e^{A_2 \Delta t_2} \mathbf{g}_1  +\dotsb + e^{A_k \Delta t_k} \mathbf{g}_{k-1} + \mathbf{g}_{k},
\label{bloch_k_th_state}
\end{multline}
where
\begin{equation}
\mathbf{g}_k = (e^{A_k \Delta t_k} - \mathbb{I}) A_k^{-1} \mathbf{b}, \quad \Delta t_k  = t_k - t_{k - 1}, \quad k = 1, \dots, M.
\label{g_definition}
\end{equation}

Denote Bloch vectors of $\Phi(T, u, n)\rho_0^{(j)}$ and $U\rho_0^{(j)}U^\dagger$ as $\mathbf{r}^{(j)}(T, v)$ and $\mathbf{r}^{(j)}_U$, respectively. For numeric optimization, it can be useful to use parametrization for the objective functional~(\ref{functional_on_states}), which we provide in Appendix~\ref{Sec:AppendixB}. The gradient of the objective functional~(\ref{functional_on_states}) with respect to controls~(\ref{PConst_qubit_control:u})~and~(\ref{PConst_qubit_control:w}) is~\cite{PetruhanovPhotonics2022}:
\begin{equation}
\pd{F_{U,K}}{v^m_k} = 
\frac{1}{K}\sum_{j = 1}^K \left(\mathbf{r}^{(j)}(T, v) -  \mathbf{r}^{(j)}_U\right) \cdot \pd{\mathbf{r}^{(j)}(T, v)}{v^m_k},\quad k = 1, \dotsc, M,\quad m = 1,2.
\label{functional_on_states_bloch_gradient}
\end{equation}
The gradient of the final state is given by the following expressions~\cite{PetruhanovPhotonics2022, PetruhanovPechenJPA2023}:
\begin{multline} 
\pd{\mathbf{r}(T)}{v^m_k}  = e^{A_N \Delta t_N} \dots e^{A_{k + 1} \Delta t_{k + 1}}\bigg[ \pd{}{v_k^m}\left(e^{A_k \Delta t_k}\right)\mathbf{r}_{k-1} + \pd {\mathbf{g}_k}{(u_k, w_k)}\bigg],\\
k = 1, \dots, M,\quad m = 1,2.
\label{final state gradient u and w}
\end{multline}
The explicit expressions for the derivatives $\pd{}{v_k^m}\left(e^{A_k \Delta t_k}\right)$ and $\pd {\mathbf{g}_k}{(u_k, w_k)}$ are provided in Appendix~\ref{Sec:AppendixC}.

Finally, to numerically optimize the objective functional, we used the gradient descent method. The $(i+1)$th iteration of the algorithm gives a control $v^{(i)}$, starting from an initial guess $v^{(0)}$:
\begin{equation}
v^{(i + 1)} = v^{(i)} - h^{(i)} \mathrm{grad}_{v} F_{U,K}\big(u^{(l)}, {w^{(l)}}^2; \rho_0^{(1)}, \dotsc, \rho_0^{(K)}\big),\quad i = 0,1,\dotsc.
\label{gradient_descent}
\end{equation}
For the step length $h^{(i)}$, we used the adaptive scheme proposed in~\cite{PetruhanovPhotonics2022}. As a stopping criterion, we used the standard stop criterion corresponding to the first-order optimality condition; iterations continued until
\begin{equation}
F_{U,K}\left(u^{(i)}, n^{(i)};\rho_0^{(1)}, \dotsc, \rho_0^{(K)}\right) < \varepsilon.
\label{stop_criterion}
\end{equation}

\section{Numeric Analysis of the Control Landscapes}\label{Sec5}

We studied the properties of the quantum control landscape for the generation of unitary $H$ and $T$ gates~(\ref{H_T_gates}) and (\ref{optimization_problem}), i.e., for minimizing the objective functionals~$F_{U,K}$ for sets of the two~(\ref{first_set_basis_states}), three~(\ref{second_set_basis_states}), and four~(\ref{third_set_basis_states}) matrices in the qubit Hilbert space. One of the important properties of quantum control landscapes is the local but not global minima, also called traps. To study the quantum control landscape of the objective functional~$F_{U,K}\left(u, n; \rho_0^{(1)}, \dotsc, \rho_0^{(K)}\right)$, we performed the following statistical experiment. For each $l = 1, \dots, L$, we generated a random initial guess $\left(u^{(0),l}, n^{(0),l}\right)$ uniformly distributed in the hyper-rectangle (orthotope) $([-1, 1]\times[0,1])^{\times M}$:
\begin{equation*}
u^{(0),l} \in [-1, 1], \qquad n^{(0),l} \in [0,1], \qquad l = 1, \dots, L.
\end{equation*}
Then, using the gradient descent method described in Section~\ref{Sec4}, we found an optimized control $\left(u^{*,l}, n^{*,l}\right)$. The obtained values of the functional~$F_{U,K}\left(u^{*,l}, n^{*,l}; \rho_0^{(1)}, \dotsc, \rho_0^{(K)}\right)$ (in the figures, they are denoted as $F_{U,K}$, $U \in (H, T)$, $K = 2, 3, 4$) belong to a certain distribution on the range of all possible values of the objective functional. We built histograms approximating this distribution for each gate and each objective.  For comparison, we also built a distribution of the obtained values of the objective functional~(\ref{Frobenius_norm_functional}) $F_U\left(u^{*,l}, n^{*,l}\right)$ (in the figures, they are denoted as $F_{U}$, $U \in (H, T)$, $K = 2, 3, 4$) based on the Frobenius norm between the optimized evolution and the desired unitary gate operation. 

Numerical simulations were performed by writing a Python program using the Numpy library for fast matrix operations, the SciPy library functions \texttt{scipy.linalg.expm} for matrix exponential computation using the Pad\'{e} approximation, and \texttt{scipy.linalg.inv} for matrix' inverse computation. The following values of the system parameters were used: the transition frequency $\omega = 1$, the~dipole moment $\mu = 0.1$, the decoherence rate coefficient $\gamma = 0.01$, and the regular partition of the time segment~$[0, T]$ with $T = 5$ into $M = 10$ segments, such that each segment has the length $\Delta t_k = T/M = 0.5$. The stopping parameter $\varepsilon = 10^{-5}$, the parameters of the adaptive scheme of the step length~\cite{PetruhanovPhotonics2022} were $h^{(0)} = 1$, $c = 1.1$, $d = 0.5$, $L_{\rm stuck} = 20$. Integral formulae were approximated using the trapezoidal rule with the number of partitions $N_{\rm partition} = 20$. 

For gates $H$ and $T$, in Figure~\ref{Fig1:H} and Figure~\ref{Fig2:T}, respectively, for each of the three objective functionals, we built histograms of the distributions of best objective values (minimal infidelities) obtained with a gradient search starting from $1000$ various randomly generated conditions in the hyper-rectangle initial conditions. 
 These histograms approximate the distributions of best objective values. 

The obtained centers and widths of each peak, which were computed as the mean value and doubled standard deviation for each peak, are presented in Table~\ref{Tab1}.
\begin{table}[ht!]
\centering
\caption{Centers ($C_1$, $C_2$) and widths ($W_1,W_2$) of the obtained distributions for objective functionals defined with number of matrices $j=2,3,4$. For cases with two peaks, the first is the peak shown as green in Fig.~\ref{Fig3:T}.} 
\begin{tabular}{ |c|c|c c|c c| } 
\hline
\multicolumn{2}{|c|}{Func.} & $C_1$ & $W_1$ & $C_2$ & $W_2$ \\
\hline
\multirow{2}{2.5em}{$j = 2$} & $F_{H,2}$ & $1.601\times10^{-3}$ & $4.227\times10^{-5}$ & - & - \\ 
\cline{2-6}
                             & $F_{H}$   & $5.611$             & $4.627\times10^{-1}$ & - & -  \\ 
\hline
\multirow{2}{2.5em}{$j = 3$} & $F_{H,3}$ & $3.484\times10^{-4}$ & $1.276\times10^{-5}$ & - & -  \\ 
\cline{2-6}
                             & $F_{H}$   & $2.657\times10^{-2}$ & $1.979\times10^{-3}$ & - & -  \\ 
\hline
\multirow{2}{2.5em}{$j = 4$} & $F_{H,4}$ & $7.525\times10^{-4}$ & $2.317\times10^{-5}$ & - & -  \\ 
\cline{2-6}
                             & $F_{H}$   & $5.183\times10^{-3}$ & $1.527\times10^{-4}$ & - & -  \\ 
\hline
\multirow{2}{2.5em}{$j = 2$} & $F_{T,2}$ & $2.374\times10^{-3}$ & $1.236\times10^{-5}$ & - & - \\ 
\cline{2-6}
                             & $F_{T}$   & $4.795\times10^{-1}$ & $1.346\times10^{-2}$ & - & - \\ 
\hline
\multirow{2}{2.5em}{$j = 3$} & $F_{T,3}$ & $5.964\times10^{-4}$ & $6.720\times10^{-6}$ & $9.495\times10^{-4}$ & $1.821\times10^{-5}$\\ 
\cline{2-6}
                             & $F_{T}$    & $1.111\times10^{-2}$ & $6.866\times10^{-4}$ & $1.091\times10^{-2}$ & $3.094\times10^{-4}$ \\ 
\hline
\multirow{2}{2.5em}{$j = 4$} & $F_{T,4}$ & $1.317\times10^{-3}$ & $1.592\times10^{-5}$ & $1.624\times10^{-3}$ & $1.998\times10^{-5}$ \\ 
\cline{2-6}
                             & $F_{T}$ & $6.718\times10^{-3}$ & $2.592\times10^{-5}$ & $6.599\times10^{-3}$ & $2.735\times10^{-5}$\\ 
\hline
\end{tabular}
\normalsize
\label{Tab1}
\end{table}

\begin{figure}[ht!]
\includegraphics[width = 1\linewidth]{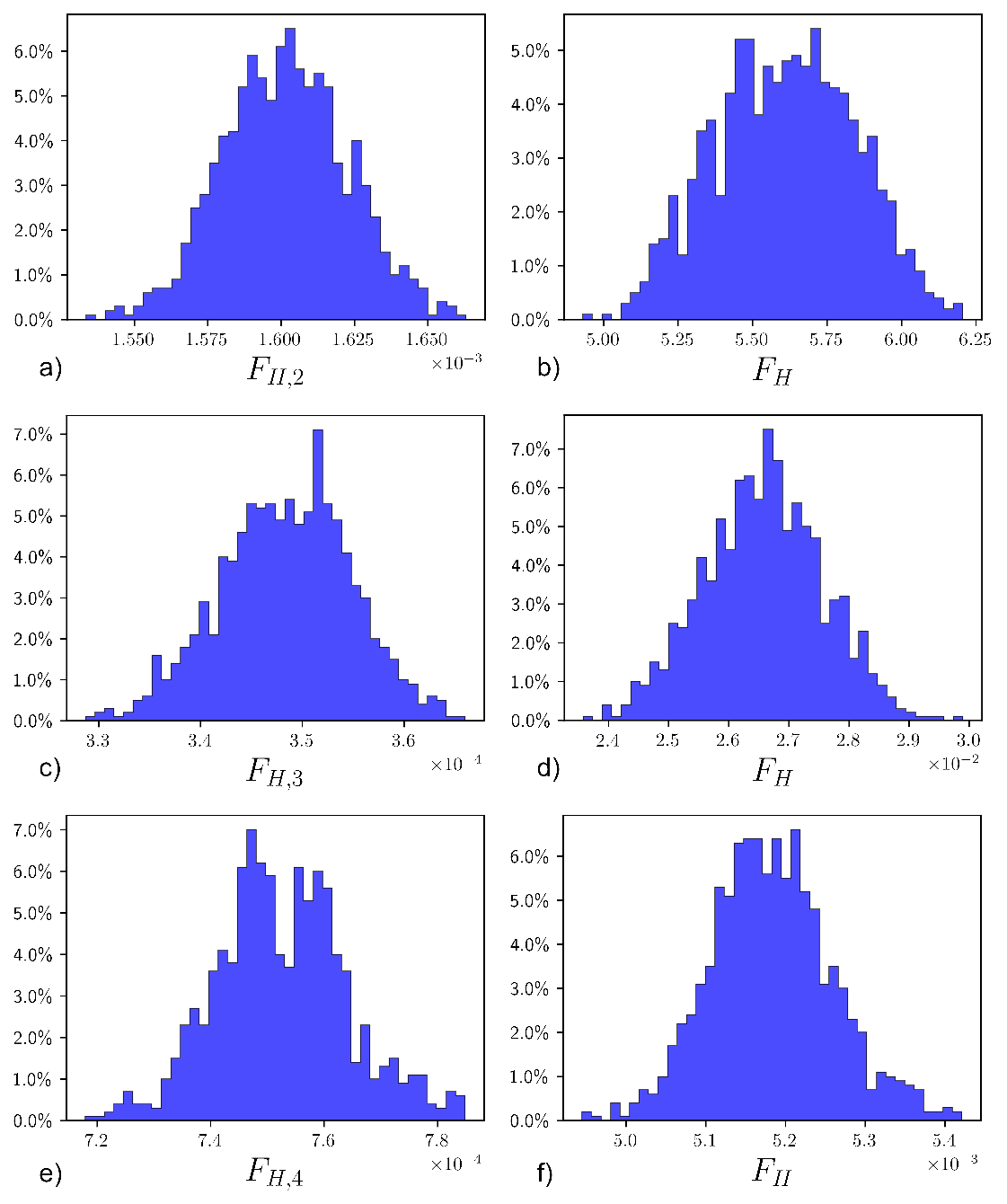}
\caption{Histograms describing distributions obtained by GRAPE values of the objective functionals for generating the $H$ gate. The values obtained starting from $L=1000$ random initial conditions uniformly distributed in some hyper-rectangle. Left column: for two (\textbf{a}), three (\textbf{c}), and four (\textbf{e}) matrices. Right column: the Frobenius norm $F_H$ with optimized controls for two (\textbf{b}), three (\textbf{d}), and four (\textbf{f})~matrices.\label{Fig1:H}}
\end{figure}

The results are surprising. For the Hadamard gate, the distributions of the best objective values obtained with the gradient search  for all considered objectives have a simple form with just one peak. However, for the $T$ gate, the situation is completely different---the distribution for the objective functional defined by two matrices also has one peak, whereas distributions of the best obtained values for the objective functionals defined by three and four matrices have two isolated peaks. This might indicate the possible presence of two isolated minima with different values of the infidelity for these two objective functionals. It is important that, as discussed above, among these three objective functionals, the smallness of the objective with two matrices does not guarantee the closeness of the generated process to the target gate. Only objectives defined with three and four matrices guarantee the closeness of the generated gate to a target and can be used as a good measure for the gate generation problem, and exactly for these objectives, we observe two peaks. 

\begin{figure}[ht!]
\includegraphics[width = 1\linewidth]{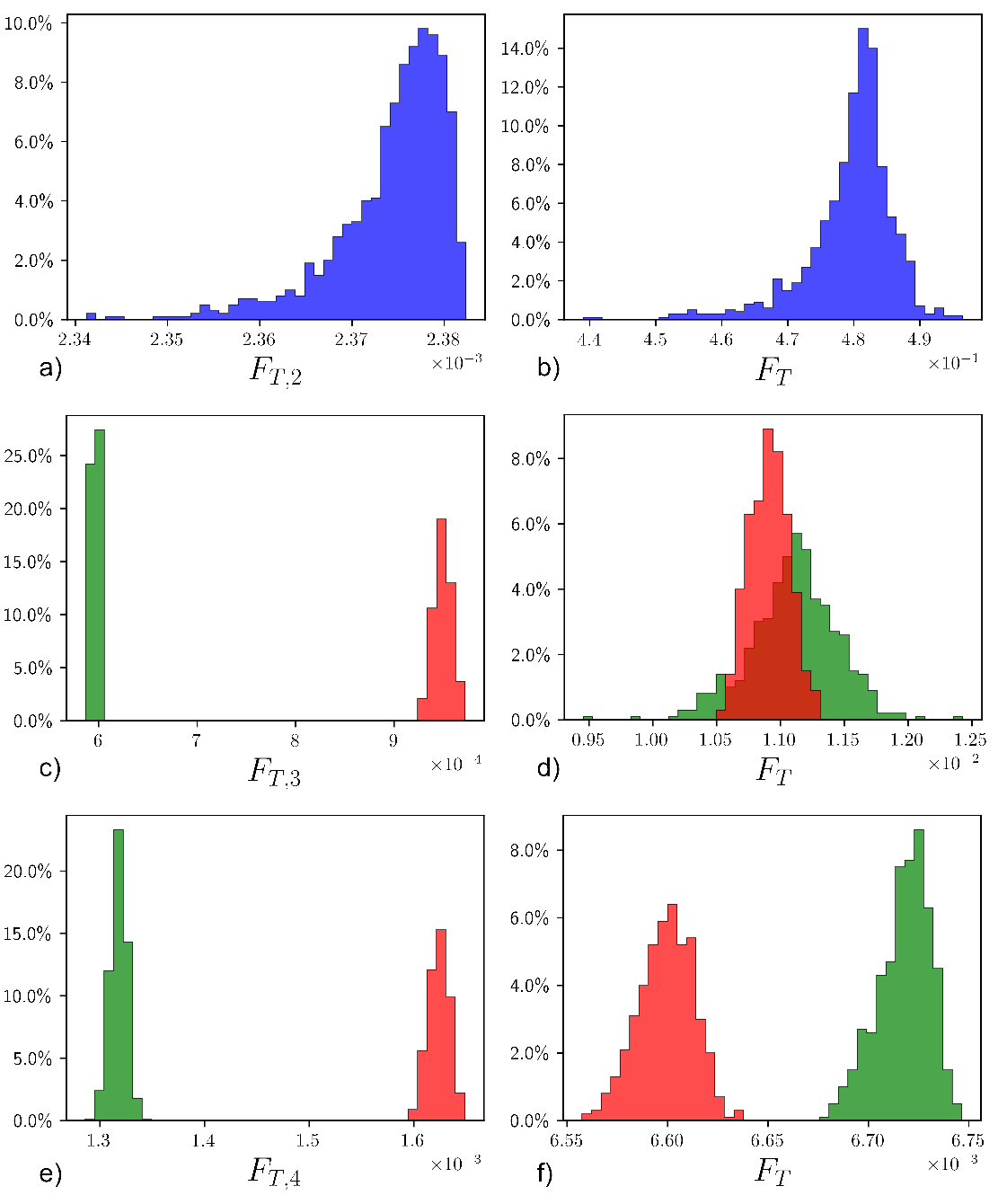}
\caption{Histograms describing distributions obtained by GRAPE values of the objective functionals for generating the $T$ gate. The values obtained starting from $L=1000$ random initial conditions uniformly distributed in some hyper-rectangle. Left column: for two (\textbf{a}), three (\textbf{c}), and four (\textbf{e}) matrices. Right column: the Frobenius norm $F_T$ with optimized controls for two (\textbf{b}), three (\textbf{d}), and four (\textbf{f}) matrices. Two separate peaks are shown in green and red colors.\label{Fig2:T}}
\end{figure}

To study cases with two peaks if these peaks correspond to two different groups of controls, we analyzed the optimized controls used for sub-plots (c) and (e) in Figure~\ref{Fig2:T}  in more detail. For each of these two sub-plots, we plotted all 1000 optimized coherent and incoherent controls for the generation of the $T$ gate in Figure~\ref{Fig3:T}, i.e., left and right sub-plots. The upper row corresponds to sub-plot (c) in Figure~\ref{Fig2:T}, and the bottom row to sub-plot (e). The coherent controls are clearly divided in two groups, which are symmetric with respect to $u=0$ line and are separated in the functional space of controls. Incoherent controls are also divided in two distinct sub-manifolds, which are separated in the space of controls, although the controls in these subgroups do intersect. Solutions to quantum optimal control problems are known to not comprise isolated points, but form multi-dimensional sub-manifolds in the control space, as was shown for finite-dimensional~\cite{RabitzPRA2006} and continuous-variable systems~\cite{LaroccaPRA2020}. Our finding confirms these results for the most general and previously unexplored case when both coherent and incoherent controls are used. 

\begin{figure}[ht!]
\includegraphics[width = \linewidth]{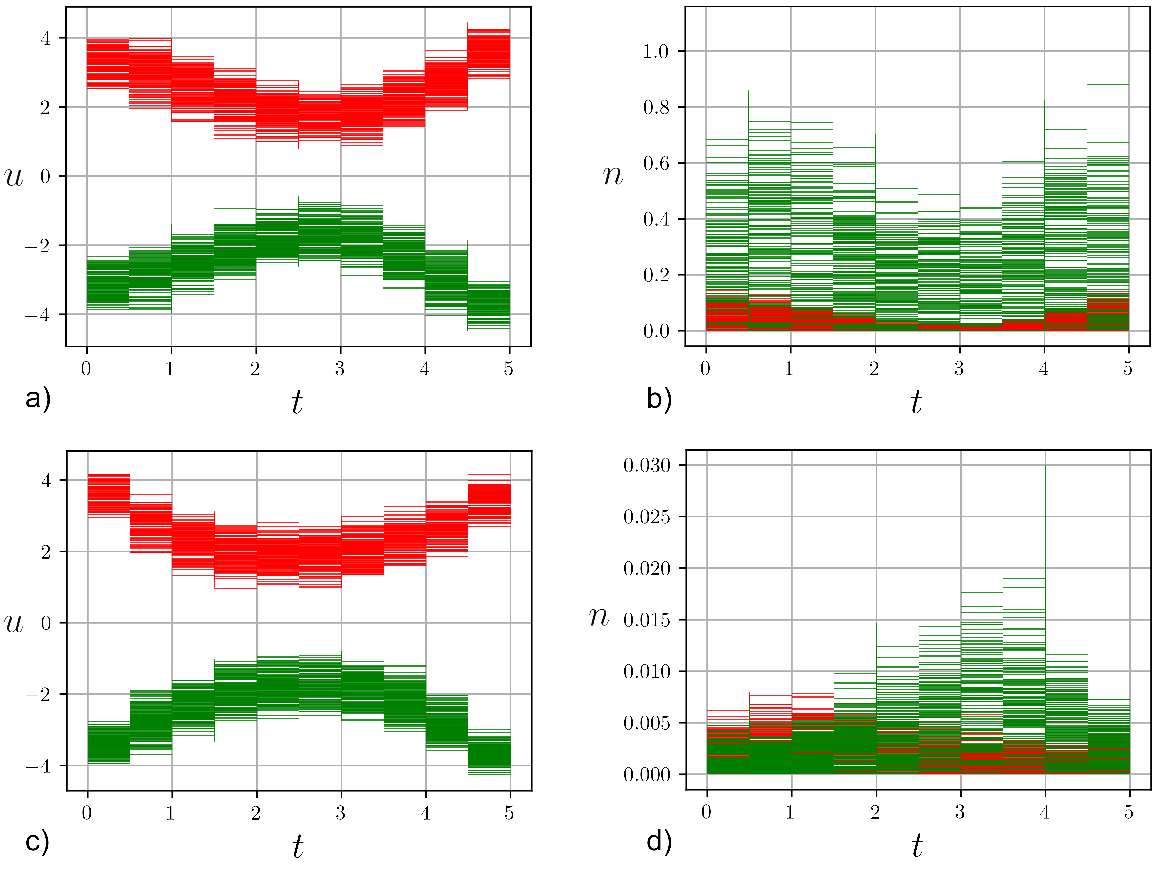}
\caption{Plots of all 1000 coherent (\textbf{a},\textbf{c}) and incoherent (\textbf{b},\textbf{d}) controls obtained by GRAPE optimization for generation of the $T$ gate. Sub-plots (\textbf{a}) and ({\bf b}): for the objective functional $F_{T,3}$. Sub-plots (\textbf{c}) and ({\bf d}): for the objective functional $F_{T,4}$.  Green (resp., red) color shows all controls leading to the left (resp., right) peak on the corresponding histograms in Figure~\ref{Fig2:T}. Both coherent and incoherent controls are clearly combined in two groups.}
\label{Fig3:T}
\end{figure}

\section{Discussion and Open Problems}\label{Conclusions}

We studied several control landscapes for the problem of the optimal generation of single-qubit $H$ and $T$ gates using coherent control and the environment as a constructive resource acting on the qubit via incoherent control. We considered three different landscapes for this problem, corresponding to the three different objective functionals defined by steering between two, three, and four matrices in two-dimensional qubit Hilbert space. We built histograms of distributions of the best objective values (minimal infidelities) obtained with a gradient search starting from $1000$ various randomly generated conditions in some hyper-rectangle initial conditions. These histograms approximate the best obtained distributions from the GRAPE objective values. 

The following observations were made from the obtained results.

First, for the Hadamard gate, which is a Clifford gate, all the considered objective distributions of the best objective values obtained with the gradient search have a simple form with just a one peak. However, for the $T$ gate, which is a non-Clifford gate, the situation is completely different---the distribution of the best values for some objective functionals has one peak, whereas the distribution of the best values for the other objectives has two isolated peaks, which might indicate the possible existence of two isolated minima in the control landscape with different values of the fidelity. This finding, and the question as to whether there is a relationship between the fact that the $H$ gate is a Clifford gate and the $T$ gate is not, requires further analysis. 

Second, a less obvious finding is that, for the $T$ gate, the distribution of the GRAPE-minimized GRK-type objective $F_{T,3}$ with three matrices has two significantly separated peaks, whereas the distribution of the objective computed with the Frobenious norm for the same optimized controls has two overlapping peaks corresponding to two separate groups of controls. Moreover, a larger value peak in the $F_{T,3}$ histogram has a lower value center in the Frobenius norm $F_{T}$ histogram. The same is true for the $F_{T,4}$ and the $F_{T}$ histograms.

Third, the minimal infidelities obtained for the objective functional $F_{U,2}$  defined with only two states are higher by one order of magnitude than the infidelities obtained for objective functionals $F_{U,3}$  and $F_{U,4}$ defined with three and four states. Naively, one could expect the opposite situation, since the objective functional $F_{U,2}$  may even be minimized by gates that act as the target only on the two basis states, but which can act differently from the target gate in some other states. Moreover, the objective $F_{U,4}$ includes $1/2F_{U,2}$ as a summand together with some non-negative term:
\begin{multline*}
    F_{U,4}\left(u, n; \rho_0^{(1)}, \rho_0^{(2)}, \rho_0^{(3)}, \rho_0^{(4)}\right)=\frac{1}{2}F_{U,2}\left(u, n; \rho_0^{(1)}, \rho_0^{(2)}\right) +\\ \frac14\biggl(\left\| \Phi(T, u, n) (\ket{+}\bra{+}) - U\ket{+}\bra{+}U^\dagger \right\|^2 + \left\| \Phi(T, u, n) (\ket{-}\bra{-}) - U\ket{-}\bra{-}U^\dagger \right\|^2\biggr),
\end{multline*}
where $\left\{\rho_0^{(1)}, \rho_0^{(2)}, \rho_0^{(3)}, \rho_0^{(4)}\right\}$ is the third set of states~(\ref{third_set_basis_states}).
However, if we optimize $F_{U,3}$ or $F_{U,4}$ and substitute the obtained controls in $F_{U,2}$, then we obtain smaller values of $F_{U,2}$ than if we optimize $F_{U,2}$ directly. This finding is counterintuitive and may indicate that $F_{U,3}$ and $F_{U,4}$ not only more exactly represent the problem of gate generation, but that they are also more appropriate for {\it efficient} GRAPE optimization. 

Fourth, in cases with two peaks, the controls corresponding to each of the peaks form subsets (sub-manifolds) in the space of controls. The fact that optimal controls form sub-manifolds in the space of controls was shown for the coherent control of finite-level and continuous systems in~\cite{RabitzPRA2006,LaroccaPRA2020}. Here, we discovered this feature for the most general situation when coherent and incoherent controls are used together, and moreover, we observed the existence of two such sub-manifolds separated by a large distance.

The physical meaning of the results is that an experimental local (e.g., gradient-based) search for optimal controls implementing a single-qubit $T$ gate may converge to two maxima with different infidelities: one of which is higher and, therefore, worse; whereas for the $H$ gate, we did not observe this behavior. This circumstance should be taken into account in such experiments on quantum controls. Another observation is that objectives defined using three or four states may be more suitable for optimization, as they provide lower values of the infidelities compared to the objective defined using two states (which may also give a non-unique solution). There is also a distinction between the behavior of minimal infidelities of these objectives and the objective defined by the Frobenious distance between the actual and the target maps---the latter, to some degree, ``hides'' the two local peaks found for the former. The finding of two sub-manifolds of (probably locally) optimal controls with different infidelities implies the importance of a suitable choice of the initial controls for optimization. An incorrect choice may lead to the convergence of the algorithm to a control without the best infidelity. All of these factors have to be taken into account for the practical experimental realization of the optimal generation of $H$ and $T$ gates.

\bigskip
\bigskip

\noindent {\bf Abbreviations:} 
\begin{itemize}
\item GKSL --- Gorini--Kossakowski--Sudarshan--Lindblad;
\item GRAPE --- GRadient Ascent Pulse Engineering;
\item GRK type objective --- objective functional for generating unitary gates under dissipative evolution where only the three initial density matrices as considered by M.Y.~Goerz, D.M.~Reich, and C.P.~Koch in~\cite{Goerz_NJP_2014_2021, Goerz_2021}. 
\end{itemize}
\medskip

\noindent {\bf Acknowledgement.}
This work was funded by the Ministry of Science and Higher Education of Russian Federation (Project No. 075-15-2020-788).
\medskip

\appendix
\appendixpage

\section{Sets of States}\label{Sec:AppendixA}

\subsection{First Set} 

The first set~(\ref{first_set_basis_states}) corresponds to two pure states $\ket{0}$ and $\ket{1}$ of the standard basis in $\mathcal{H} = \C^2$. These two states are sufficient to discriminate any operators on $\C^2$ by their action on the states. However, states of the quantum system are defined up to an arbitrary phase. Therefore, there can be a situation when different unitary gates act identically on these~states. 

In~\cite{PetruhanovPhotonics2022}, an example was provided when two different gates, the Hadamard gate~$H$ and the quantum gate $\sqrt{Y} \sim \exp(-i\pi/4\sigma_y)$ (rotation along y-axis by $\pi/2$ angle in the Bloch ball),  act identically on the initial states $\ket{0}\bra{0}$ and $\ket{1}\bra{1}$. Actually, this example can be generalized to all unitary operations that act on $\ket{0}\bra{0}$ and $\ket{1}\bra{1}$ identically as the Hadamard gate:
\begin{equation}
H\ket{0}\bra{0} H = \ket{+}\bra{+},\qquad H\ket{1}\bra{1} H = \ket{-}\bra{-},
\label{Hadamard_action_two}
\end{equation}
are rotations along axis $\mathbf{n} = \left(\cos\theta/\sqrt{2}, \sin\theta, \cos\theta/\sqrt{2}\right)$ by angle $2\arctan(1/\sin\theta)$, $\theta \in (-\pi/2, \pi/2]$, i.e., unitary quantum gates of the form
\begin{equation}
U(\theta) = \exp\left[-i \arctan\left(\frac1{\sin\theta}\right) \left(\frac{\cos\theta}{\sqrt{2}}\sigma_x + \sin\theta\sigma_y + \frac{\cos\theta}{\sqrt{2}}\sigma_z\right)\right], \quad\theta \in (-\pi/2, \pi/2].
\label{rotations_as_H}
\end{equation}
The quantum gates $H$ and $\sqrt{Y}$ are particular cases corresponding to $\theta = 0$ and $\pi/2$, respectively.

This non-uniqueness of unitary operations comes from the fact that the states $\ket{0}$ and $\ket{1}$ are orthogonal, i.e., their Bloch vectors are opposite. Indeed, in the Bloch ball, a unitary operation $U$ is rotation $\phi_U$ and, therefore, a linear map. If states are opposite, $\br$ and $-\br$, then knowing the action of the rotation $\phi_U$ on them only gives information about one action $\phi_U(\br)$ because of the linear dependence of the vectors. There are many rotations that act the same: they are all along the axis orthogonal to  $\phi_U(\br) - \br$ at the corresponding angle if $\phi_U(\br) \neq \br$, or along the axis $\br$ at any angle if $\phi_U(\br) = \br$. 

According to \cite{Reich2013}, there are two states $\{\rho_1, \rho_2\}$ that can distinguish any two unitary operations. This is equivalent to them having the commutant space $\mathcal{K}(\{\rho_1, \rho_2\})$ containing only the identity $\I$. It can be satisfied if $\{\rho_1, \rho_2\}$ is a complete and totally rotating set (see the definitions in \cite{Reich2013}). This finding corresponds to the statement mentioned above. If we take two orthogonal pure qubit states $\ket{\psi}$ and $\ket{\varphi}$, $\scalarproduct{\psi}{\varphi} = 0$, then the commutant space of their density matrices contains more than just the identity. For instance, the commutant space of $\ket{0}\bra{0}$ and $\ket{1}\bra{1}$ contains all the phase-shift gates $U_\delta$ including the identity:
\begin{equation}
\mathcal{K}(\{\ket{0}\bra{0}, \ket{1}\bra{1}\}) = \{U_\delta, \delta \in [0, 2\pi)\},\qquad U_\delta = \begin{pmatrix}
1 & 0\\
0 & e^{i\delta}
\end{pmatrix}.
\label{phaseshift_gate}
\end{equation}
Therefore, they cannot distinguish two unitary gates. 

Conversely, any two non-orthogonal qubit pure states $\ket{\psi}$ and $\ket{\varphi}$, $\scalarproduct{\psi}{\varphi} \neq 0$, form a complete and totally rotating set $\{\ket{\psi}\bra{\psi}, \ket{\varphi}\bra{\varphi}\}$. Completeness is obvious and the set is totally rotated with respect to, e.g., the state $\frac12\ket{\psi}\bra{\psi} + \frac12\ket{\varphi}\bra{\varphi}$. Thus, any two non-orthogonal qubit pure states $\ket{\psi}$ and $\ket{\varphi}$ can distinguish two unitary operations. Recalling that two non-orthogonal qubit pure states correspond to two non-opposite vectors in the Bloch ball and unitary operations correspond to rotations in the Bloch ball, this statement means that two non-opposite unit vectors are necessary and sufficient to determine a unique rotation of the Bloch ball, which is a simple geometric statement to be proved. It can be generalized to any two non-parallel (not necessarily unit length) vectors. 

Since we considered the dynamics of the open quantum system~(\ref{system_qubit}), a suitable set of states should distinguish not only two unitary operations, but also a unitary operation and a non-unitary operation. For this, three and more states might be used.

\subsection{Second Set} 
The second set~(\ref{second_set_basis_states}) corresponds to three special mixed states which, according to~\cite{Goerz_NJP_2014_2021}, are sufficient for the proper implementation of a unitary quantum gate in any $N$-level open quantum system. One cannot guarantee that the objective functional~(\ref{functional_on_states}) will be minimized exactly to zero as a result of the optimization. It might be that the optimized evolution will act on the states not exactly as the desired unitary operation, but with a small error. Moreover, it is not evident that generated evolution will differ from the desired unitary operation with an error of the same order. Therefore, the optimized evolution should be post-checked by, e.g., the functional~(\ref{Frobenius_norm_functional}), as it is shown in Section~\ref{Sec5}.

\subsection{Third Set}

The third set~(\ref{third_set_basis_states}) corresponds to the basis used in~\cite{PetruhanovPhotonics2022}: $\{\rho_0^{(j)}\}_{j=1}^4$ of $2\times2$ Hermitian matrices in the real four-dimensional linear space. 
 For each unitary gate $U$, there exists a unique linear map that acts on these matrices as $U \cdot U^\dagger$. Moreover, since these states form the basis, a small value of the optimized functional~(\ref{functional_on_states}) means that the functional based on the Frobenius norm~(\ref{Frobenius_norm_functional}) also has a small value and the optimized evolution is close to the desired unitary operation. 

\section{Parametrization and Property of the Functionals}\label{Sec:AppendixB}

Here, we consider the parametrization~(\ref{4_bloch_variables}) of the functionals~(\ref{functional_on_states})~and~(\ref{Frobenius_norm_functional}) in the basis $M_k = \sigma_k/2$, $k = 0, 1, 2, 3$. The norm of any density matrix $\rho$ is equal to
\begin{equation}
\|\rho\|^2 = 1/2\|x\|^2.
\label{norm_relation}
\end{equation}
Taking into account~(\ref{norm_relation}), the objective functional~(\ref{functional_on_states}) becomes equal to
\begin{align}
F_{U,K}\left(u, n; \rho_0^{(1)}, \dotsc, \rho_0^{(K)}\right) &= \frac{1}{2K}\sum_{j = 1}^K \left\|(\Psi(T, u, n) - \Psi_U)(1, \br_0^{(j)}) \right\|^2\nonumber \\
&= \frac{1}{2K}\sum_{j = 1}^K \left\|\br^{(j)}(T, u, n) - \br_U^{(j)}\right\|^2.
\label{functional_on_states_bloch}
\end{align}
The basis $M_k = \sigma_k/2$ is orthonormal up to a constant: $\frac14{\rm Tr}(\sigma_i\sigma_j) = \frac12\delta_{ij},$ $i,j \in \{0, 1, 2, 3\}$. Therefore for the functional~(\ref{Frobenius_norm_functional}), the following invariance holds:
\begin{equation}
    F_U(u, n) = \|\Phi(T, u, n) - U \cdot U^\dagger\|^2 = \|\Psi(T,u,n) - \Psi_U\|^2,
    \label{functional_Frobenius_invariance}
\end{equation}
where $\Psi(t, u, n)$ and $\Psi_U$ are matrices of the evolution operator $\Phi(t, u, n)$ and the unitary operation $U \cdot U^\dagger$, respectively, in the basis $M_k = \sigma_k/2$. 

An interesting question is how the two different functionals (\ref{functional_on_states}) and (\ref{Frobenius_norm_functional}) are related. In particular, if, for the functional defined with states (\ref{functional_on_states}), we take the three standard basis states in $\R^3$:
\begin{equation}
\br_0^{(1)} = (1, 0, 0),\qquad\br_0^{(2)} = (0, 1, 0),\qquad\br_0^{(3)} = (0, 0, 1).
\label{standad_basis_R3}
\end{equation}

\begin{proposition}
If the evolution $\Phi(t, u, n)$ is unital and states $\rho_0^{(j)}$ correspond to~(\ref{standad_basis_R3}), then
\begin{equation}
F_U(u, n) = 6F_{U,3}\left(u,n;\rho_0^{(1)},\rho_0^{(2)},\rho_0^{(3)}\right).
\label{if_unital_equation}
\end{equation}
\end{proposition}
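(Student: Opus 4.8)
The plan is to reduce both functionals to the same Frobenius norm of a single $3\times 3$ matrix and then to compare the resulting normalization constants. First I would invoke the invariance \eqref{functional_Frobenius_invariance}, which already rewrites the Frobenius objective as $F_U(u,n) = \|\Psi(T,u,n) - \Psi_U\|^2$ in terms of the $4\times 4$ Bloch matrices. The key structural input is that both maps under consideration are unital: the target $U\cdot U^\dagger$ always fixes $\I$, so $\Psi_U' = 0$, and the hypothesis that $\Phi(T,u,n)$ is unital gives $\Psi' = 0$ as well, via the discussion around \eqref{bloch_vector_evolution_structure_linear}. Hence, in the block form \eqref{4_bloch_equation_evolution_matrix_structure}, the difference $\Psi - \Psi_U$ has a vanishing top-left entry, a vanishing first column below it, and a vanishing first row, so its only surviving block is $\Psi'' - \Psi_U''$. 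Consequently $F_U(u,n) = \|\Psi'' - \Psi_U''\|^2$, the squared Frobenius norm of the $3\times 3$ part alone.

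Next I would expand $F_{U,3}$ using its Bloch form \eqref{functional_on_states_bloch}, namely $F_{U,3} = \tfrac{1}{6}\sum_{j=1}^3 \|\br^{(j)}(T,u,n) - \br_U^{(j)}\|^2$. By unitality both $\br^{(j)}(T,u,n) = \Psi''\br_0^{(j)}$ and $\br_U^{(j)} = \Psi_U''\br_0^{(j)}$, so each summand equals $\|(\Psi'' - \Psi_U'')\br_0^{(j)}\|^2$. The crucial observation is that the chosen Bloch vectors \eqref{standad_basis_R3} are precisely the standard basis $e_1, e_2, e_3$ of $\R^3$, and applying a matrix to $e_j$ extracts its $j$-th column. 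Therefore $\|(\Psi'' - \Psi_U'')e_j\|^2$ is the squared norm of the $j$-th column of $\Psi'' - \Psi_U''$, and summing over $j = 1,2,3$ reconstitutes the full squared Frobenius norm $\|\Psi'' - \Psi_U''\|^2$.

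Combining the two computations yields $F_{U,3} = \tfrac{1}{6}\|\Psi'' - \Psi_U''\|^2 = \tfrac{1}{6} F_U$, which is the claimed identity $F_U = 6 F_{U,3}$. This is essentially a direct calculation rather than a deep result, so I do not anticipate a genuine obstacle; the one point demanding care is verifying that \emph{both} the controlled map and the target unitary are unital, so that the $1\times 1$ block in $\Psi - \Psi_U$ contributes nothing and the difference collapses to the $3\times 3$ block. Everything else is bookkeeping of the factor $2K = 6$ arising from the $1/(2K)$ normalization in \eqref{functional_on_states_bloch} together with the identification of the three initial Bloch vectors with an orthonormal basis of $\R^3$.
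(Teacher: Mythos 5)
Your proof is correct and follows essentially the same route as the paper's: both arguments use unitality to kill the inhomogeneous (affine) part of the evolution, decompose the Frobenius norm as a sum of squared column norms over the three standard basis Bloch vectors, and track the $1/(2K)=1/6$ normalization. The only cosmetic difference is that you collapse to the $3\times3$ block $\Psi''-\Psi_U''$ at the outset, whereas the paper works with the full $4\times4$ matrices and uses the shift $\bq^{(j)}\mapsto\bq^{(j)}+\bq^{(0)}$ to identify columns with the vectors $(1,\br_0^{(j)})$; the substance is identical.
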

\begin{proof}
Denote $\bq^{(j)}$, $j = 0, 1, 2, 3$, as elements of the standard basis in $\R^4$. If the evolution is unital, i.e., $\Psi(t, u, n)$ preserves $\bq^{(0)} = (1, 0, 0, 0)$, then, given that unitary evolution is always unital,
\begin{equation}
(\Psi(t, u, n) - \Psi_U) \bq^{(0)} = \bq^{(0)} - \bq^{(0)} = 0.
\label{unitality_cosequence}
\end{equation}
The basis $\sigma_j/2$ is orthogonal up to a constant $c = 1/2$; therefore, according to~(\ref{norm_relation}),
\begin{equation}
\|(\Phi(t, u, n) - U \cdot U^\dagger)\rho^{(j)}\|^2 = 1/2\|(\Psi(t, u, n) - \Psi_U)(1, \br^{(j)})\|^2.
\label{norm_relation_for_proof}
\end{equation}
Thus, taking into account~(\ref{unitality_cosequence})~and~(\ref{norm_relation_for_proof}), we have
\begin{multline*}
F_U(u, n) = \|\Psi(t, u, n) - \Psi_U\|^2 = \sum_{j = 0}^3\|(\Psi(t, u, n) - \Psi_U)\bq^{(j)}\|^2 \\= \sum_{j = 1}^3\|(\Psi(t, u, n) - \Psi_U)\bq^{(j)})\|^2 = \sum_{j = 1}^3\|(\Psi(t, u, n) - \Psi_U)(\bq^{(j)} + \bq^{(0)})\|^2 \\= \sum_{j = 1}^3\|(\Psi(t, u, n) - \Psi_U)(1, \br_0^{(j)})\|^2 = 2\sum_{j = 1}^3\|(\Phi(t, u, n) - U \cdot U^\dagger)\rho_0^{(j)}\|^2 \\= 6F_{U,3}\left(u,n;\rho_0^{(1)},\rho_0^{(2)},\rho_0^{(3)}\right).
\end{multline*}
\end{proof}

\section{Gradient-Based Optimization Method}\label{Sec:AppendixC}

Here, we provide explicit expressions for derivatives used in Section~\ref{Sec4}:
\begin{align}
&\pd {\mathbf{g}_k}{v^1_k} = \pd {\mathbf{g}_k}{u_k} = \left(\pd{}{u_k}e^{A_k \Delta t_k} - (e^{A_k \Delta t_k} - \I)A_k^{-1} B^u\right) A_k^{-1} \mathbf{b},\label{g gradient u}\\ 
&\pd{}{v^1_k}e^{A_k \Delta t_k} = \pd{}{u_k}e^{A_k \Delta t_k} = \int \limits _0^{\Delta t_k} e^{A_k t }\, B^u\, e^{A_k (\Delta t_k-t)} {\mathrm{d}} t,\label{F gradient u}\\
&\pd {\mathbf{g}_k}{v^2_k} = \pd {\mathbf{g}_k}{w_k} = \left(\pd{}{w_k}e^{A_k \Delta t_k} - 2 w_k (e^{A_k \Delta t_k} - \I)A_k^{-1} B^n\right) A_k^{-1} \mathbf{b},\label{g gradient w}\\ 
&\pd{}{v^2_k}e^{A_k \Delta t_k} = \pd{}{w_k}e^{A_k \Delta t_k} = 2 w_k \int \limits _0^{\Delta t_k} e^{A_k t }\, B^n\, e^{A_k (\Delta t_k-t)} {\mathrm{d}} t,\quad k = 1, \dotsc, M.\label{F gradient w}
\end{align} 
The expressions~(\ref{F gradient u}) and~(\ref{F gradient w}) are calculated using the integral formula~\cite{Wilcox}:
\begin{equation}
\der{}{x} e^{A(x)} = \int \limits _0^1 e^{s A(x)} \der{A(x)}{x} e^{(1-s)A(x)} {\mathrm{d}} s.
\label{special_formula}
\end{equation}

\bibliographystyle{unsrturl}
\bibliography{refs.bib}

\end{document}